\newcommand{\ank}[1]{\mkern 2mu\bm{\Upsilon}\mkern -3mu_{#1}\mkern 2mu}
\newcommand{\open}{\Bbb}
\newcommand{\oN}{{\open N}}
\def\={=\!\!}
\def\an{\mkern 2mu\bm{\Upsilon}\mkern 2mu}
\newcommand*\dep{{=\mkern-1.2mu}}
\newcommand{\psfrag}[2]{}
\newtheorem{theorem}{Theorem}
\newtheorem{definition}[theorem]{Definition}
\newtheorem{example}[theorem]{Example}
\begin{document}

\title{An atom's worth of anonymity\thanks{The  author would like to thank  the Academy of Finland, grant no: 322795, and funding from the European Research Council (ERC) under the
European Union’s Horizon 2020 research and innovation programme (grant agreement No
101020762).}}
\author{Jouko V\"a\"an\"anen\\
\\
University of Helsinki, Finland}
\date{}
\maketitle

I contribute this paper on anonymity to honor the birthday of John Crossley. I am not only John's friend but also his grandson in the academic sense, as my doctoral supervisor was Peter Aczel, a student of John's. Of course, it is my hope that this attempt to find the logic of anonymity amuses John.

Suppose we have data about some individuals and in the data the values of some attributes are publicly known while the values of some other attributes are sensitive personal data that should be protected from becoming public.  {\em Anonymity} in this context  means the impossibility to derive the values of the sensitive attributes from the values of the publicly known attributes, and this impossibility should be true of each individual.

Anonymity has gained notoriety in modern times as data about our actions and choices accumulates in the internet partly unbeknownst to us and partly by our own choice. Usually people wish some data about themselves were private while some other date may be public or is even wanted to be public for publicity reasons. There are different criteria which characterize the degree of anonymity of data. Given data can also be anonymized by different techniques in order to increase its degree of anonymity.

In this paper we take a very simple ``atomic" degree of anonymity as our starting place. We axiomatize these atoms
and propose the investigation of first order logic based on these atoms. Considering the vast literature and the huge importance of anonymity our investigation may seem quite modest. However, it is about the \emph{logic} of anonymity, not about how to secure, create or break anonymity. 

In this paper, as in database theory, variables denote attributes. Thus our $x$, $y$, $z$, etc stand for such attributes as ``date", ``time", ``surname", ``first name", ``hometown", ``year of birth", ``gender", ``salary", ``blood type", ``systolic blood pressure", ``diastolic blood pressure", etc. It is natural to think of an assignment of values to such variables as a database record. We call any set of such assignments a \emph{team.}  Calling them teams carries no special meaning but is just a practice I adopted in my book \cite{MR2351449}, with teams of robots performing some task  in mind.

\begin{table}
\begin{center}\begin{tabular}{llc}
{\tt surname}&{\tt hometown}& {\tt salary}\\
\hline
Balbuk&Watarru&70,000\\
Barambah&Amata&90,000\\
Jones&Finke&100,000\\
Smith&Watarru&70,000\\
Williams&Amata&90,000\\
Yunipingu&Finke&100,000\\
\end{tabular}
\end{center}\caption{A team.\label{p1}}\end{table}

In the team of Table~\ref{p1} the attributes {\tt hometown} and {\tt salary} do not reveal 100\% the attribute 
{\tt surname} because for all combinations of {\tt hometown} and {\tt salary} there are two different values of 
{\tt surname}. So, if we know   {\tt hometown} and {\tt salary} we are still  a little uncertain  who is in question. Of course, this is an extremely weak form of anonymity, and that is why we call it ``atomic" anonymity. On the other hand, the attribute {\tt surname} completely determines {\tt hometown} and {\tt salary}, so no anonymity or uncertainty there.

We formalize the concept of keeping part of data anonymous while publishing some other part. 
We use $x,y,z,u,v,$ etc to denote finite sequences of attributes. The concatenation of $x$ and $y$ is denoted $xy$. If $s$ is an assignment and $x=\langle x_1,\ldots, x_n\rangle$, then   $\langle s(x_1),\ldots,s(x_n)\rangle$ is denoted $s(x)$.

\begin{definition}[Anonymity atom]\label{anatom}
A team $X$ satisfies ``$x$ leaves $y$ anonymous'' in the model $M$, in symbols $M\models_X x\an y$, if $$\forall s\in X\exists s'\in X(s(x)=s'(x)\wedge s(y)\ne s'(y)).$$  
\end{definition}

The intuition of $M\models_X x\an y$ is the following: We can publish $X\restriction x$ and still  keep the anonymity of $X\restriction y$ as there is {\em no} row $s$ of $X$ (i.e. {\em no} individual $s$) for which we can infer $s(y)$ from $s(x)$. A better anonymity is provided by the {\em $k$-anonymity} atom which demands in Definition~\ref{anatom} not just one but $k$ different values $s'(y)\ne s(y)$ for each $s(y)$.

Note that the full team (in a domain with at least two elements) and the empty team always satisfy $x\an y$. Unlike $\dep(x,y)$ (as in \cite{MR2351449}), the atom $x\an y$ is not closed downwards (or upwards). We can regard $x\an y$ as a strong (maximal) denial of $\dep(x,y)$.
But note  that the independence atom $x\ \bot\ y$ (as in \cite{MR3038039}) does {\em not} imply $x\an y$, because $x\ \bot\  y$ may be true because $y$ is constant. The anonymity atom is clearly closed under unions and it is first order. Hence it is definable in inclusion logic \cite{MR3111746}, in fact, if $u$ is a sequence of new variables, as many as in $y$, then

$$x\an y\iff\exists u(\neg u=y\wedge xu\subseteq xy).$$

Thus `anonymity logic' is a sublogic on inclusion logic. In fact, the two are equivalent \cite{galthesis}. See also \cite{MR997296}.

We now ask the question, what are the general principles that propositions of the form $x\an y$ obey? If we know a bunch of such propositions, what can we infer from them? For example,
is it true that if $x\an y$ and $y\an z$, then $x\an z$? That this is not so can be seen by investigating the team
of Table~\ref{p2}. This team satisfies $x\an y$ and $y\an z$ but fails to satisfy  $x\an z$ because it actually satisfies $x=z$.

\begin{table}
$$\begin{array}{ccc}
x&y&z\\
\hline
0&0&0\\
0&1&0\\
1&0&1\\
1&1&1\\
\end{array}$$
\caption{Failure of transitivity\label{p2}}\end{table}
\begin{definition}\label{a1}
The axioms of {\em anonymity} are 
\begin{description}
\item[A1] $xy\an zu$ implies $yx\an zu$ and $xy\an uz$. (Permutation)
\item[A2] $xy\an z$ implies $x\an zu$. (Monotonicity)
\item[A3]  $xy\an zy$ implies $xy\an z$. (Cancellation)
\item [A4]  $x\an $ implies $\bot$.
\end{description}
\end{definition}

From the axioms (or rules) of Definition~\ref{a1} we get a notion $\Sigma\vdash x\an y$ of {\em derivability} of an atom $x\an y$ from a set $\Sigma$ of such atoms.

\begin{theorem}\label{main} The following are equivalent for any set $\Sigma\cup\{x\an y\}$ of anonymity atoms:
\begin{enumerate}
\item $\Sigma\vdash x\an y$. 
\item $M\models_X\Sigma$ implies $M\models_X x\an y$ for every model $M$ and team $X$.
\end{enumerate}
\end{theorem}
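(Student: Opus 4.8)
\emph{Soundness, (1)$\Rightarrow$(2).} This is the routine direction: I would check that each of A1--A4 is truth-preserving on every team and conclude by induction on derivations. It helps to first record that $M\models_X x\an y$ says exactly that every $x$-fibre $\{s\in X:s(x)=a\}$ realises at least two values of $s(y)$. Then A1 is immediate; for A2, a witness $s'$ for $xy\an z$ at $s$ agrees with $s$ on $x$ and, already differing on $z$, a fortiori differs on $zu$; for A3, a witness $s'$ for $xy\an zy$ at $s$ agrees with $s$ on $y$, so its difference on $zy$ already shows up on $z$; and A4 is sound because $M\models_X x\an$ forces $X=\emptyset$, so if $\Sigma\vdash\bot$ then every team satisfying $\Sigma$ is empty and satisfies every atom.

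\emph{Completeness, (2)$\Rightarrow$(1), via the contrapositive.} Assume $\Sigma\not\vdash x\an y$; I must produce $M$ and $X$ with $M\models_X\Sigma$ and $M\not\models_X x\an y$. The first step is a syntactic analysis of $\vdash$. Since every rule of Definition~\ref{a1} has a single atom as hypothesis, a derivation is just a chain of rule applications rooted at one atom of $\Sigma$; reading atoms as pairs of finite sets of variables (using A1, and noting repeated occurrences are inessential), A2 becomes ``discard variables on the left, adjoin variables on the right'' and A3 becomes ``delete left-hand variables from the right''. A short computation then shows that from $(p',q')\in\Sigma$ one derives exactly the atoms $(p,q)$ with $p\subseteq p'$ and $q'\setminus p'\subseteq q$, and that $\bot$ is derivable iff some $(p',q')\in\Sigma$ has $q'\subseteq p'$. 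Hence $\Sigma\not\vdash x\an y$ means: no atom of $\Sigma$ has its right side inside its left side, and for every $(p',q')\in\Sigma$ either $x\not\subseteq p'$ or $q'\setminus p'\not\subseteq y$.

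\emph{The construction.} It proceeds in two layers over the (finite or infinite) set $V$ of variables occurring in $\Sigma\cup\{x\an y\}$. First, for each atom $(p',q')\in\Sigma$ I build a two-row team $X_{p',q'}=\{s_1,s_2\}$ on $V$ with domain $\{0,1\}$, with $s_1$ the all-$0$ row and $s_2$ the indicator of a one- or two-element set, such that $X_{p',q'}\models p'\an q'$ but $X_{p',q'}\not\models x\an y$. Pick $v\in q'\setminus p'$, possible since $q'\not\subseteq p'$. If some such $v$ satisfies $v\notin y$ or $v\in x$, let $s_2$ be the indicator of $\{v\}$: then $s_1,s_2$ agree on $p'$ and differ on $q'$, while $s_1$ has no genuine $x\an y$-partner, since $s_2$ either already disagrees with $s_1$ on $x$ or agrees with it on $y$ too. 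Otherwise $q'\setminus p'\subseteq y\setminus x$, so by non-derivability $x\not\subseteq p'$; choose $w\in x\setminus p'$ and let $s_2$ be the indicator of $\{v,w\}$ for any $v\in q'\setminus p'$: again $s_1,s_2$ agree on $p'$, differ on $q'$, and $s_2$ disagrees with $s_1$ on $x$. Second, let $X$ be the direct product $\prod_{(p',q')\in\Sigma}X_{p',q'}$: the team on $V$ whose domain is a product of copies of $\{0,1\}$, one coordinate per atom of $\Sigma$, and whose rows are the tuples picking a row of each factor. Two rows of $X$ agree (respectively differ) on a variable exactly when all (respectively some) coordinates do, so the $(p',q')$-coordinate alone furnishes an $p'\an q'$-partner for every row, whence $M\models_X\Sigma$; and the row that is $s_1$ in every coordinate has, by the choice of the factors, no $x\an y$-partner, so $M\not\models_X x\an y$. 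This contradicts (2).

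\emph{Where the difficulty lies.} The step I expect to be the real obstacle is exactly designing this two-layer construction. A single team on $V$ over $\{0,1\}$ does not suffice: the naive device of toggling one ``anonymising'' variable per atom produces witnesses that, once several atoms are present, combine to reinstate $x\an y$; passing to the direct product is what isolates the atoms in mutually independent coordinates and blocks this. The subsidiary fiddly point is the case split above, and in particular the observation that when no convenient $v$ lies outside $y$, it is precisely non-derivability that supplies a variable $w\in x\setminus p'$ to toggle instead. The derivability characterisation of the first step, although elementary, is the hinge connecting ``$\Sigma\not\vdash x\an y$'' to the combinatorial input the construction consumes, and it has to be set up with some care.
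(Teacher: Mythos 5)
Your proposal is correct, but it reaches the countermodel by a genuinely different route than the paper. You first extract a complete syntactic characterisation of derivability (exploiting that all rules are single-premise, so a derivation traces back to one atom of $\Sigma$: reading atoms as pairs of sets, $(p',q')$ yields exactly the $(p,q)$ with $p\subseteq p'$ and $q'\setminus p'\subseteq q$, and $\bot$ iff $q'\subseteq p'$), and then build, for each atom of $\Sigma$, a two-row team over $\{0,1\}$ that satisfies that atom but leaves the all-zero row without an $x\an y$-partner, finally gluing these by a direct product indexed by $\Sigma$; I checked the case split in the factor construction and the two product claims, and they go through (including the degenerate cases $y=\emptyset$ and $\Sigma=\emptyset$). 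The paper instead builds a single uniform team over a three-element domain, namely all $s\in\{0,1,2\}^W$ such that some $s(x_i)\ne 0$ or all $s(y_i)=0$, and verifies $M\models_X\Sigma$ by a direct case analysis in which non-derivability is invoked only once, through axiom (A2); so your remark that a single team over a two-element domain cannot work is compatible with, but should be contrasted with, the fact that a single team does suffice once a third value is available. What your route buys is an explicit decision criterion for the consequence relation and a modular construction; what the paper's buys is brevity, a fixed three-element domain, and no need to prove the characterisation lemma (whose set-versus-sequence bookkeeping with A1--A3, and the tacit convention that $\bot$ entails every atom, you wave at in the same informal way the paper does, so this is not a gap specific to your argument).
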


\begin{proof} Suppose first $\Sigma\vdash x\an y$ and $M\models_X\Sigma$. All the rules (A1)-(A4) preserve truth. Hence $M\models_X x\an y$.
Suppose then $\Sigma\nvdash x\an y$. By the validity of (A3) we may assume $x\cap y=\emptyset$. By the validity of (A4) we may assume $y\ne\emptyset$. Let $x=\langle x_1,\ldots,x_n\rangle$ and $y=\langle y_1,\ldots,y_m\rangle$. 
Let $W$ be the set of all variables occurring in $\Sigma\cup\{ x\an y\}$. Let $M$ be a model with at least three elements $0,1,2$ and
$$
\begin{array}{lcl}
X&=&\{s\in \{0,1,2\}^W : \bigvee_{1\le i\le n}(s(x_i)\ne 0)\vee
\bigwedge_{1\le i\le m}(s(y_i)= 0)\}
\end{array}$$

First of all,  $M\not\models_X x\an y$ for if $M\models_X x\an y$ and we  let $s(w)=0$ for all $w\in W$, then there would be $s'\in X$ such that $s'(x_i)=s(x_i)=0$ for all $i\in \{1,\ldots,n\}$ and $s'(y_i)\ne s(y_i)=0$  for some $i\in\{1,\ldots,m\}$, a contradiction.
We are left with proving  $M\models_X \Sigma$. To see this, let ${u\an v}\in \Sigma$. We may assume, w.l.o.g., $u\cap v=\emptyset$. Let $s\in X$. We shall construct $s'\in X$ such that $s'(u)=s(u)$ and $s'(v)\ne s(v)$.

Let us assume first:
 ${x\not\subseteq u.}$
Since $x\cap(W\setminus u)\ne \emptyset$, there is some $x_i\notin u$. Either $x_i\in v$ or $x_i\notin v$. \medskip

\noindent{\bf Case 1.1:} $x_i\in v$, i.e. $x_i=v_j$ for some $j\in\{1,\ldots,k\}$. Since there are three possible values for $s'(v_j)$, we can make sure both $s'(v_j)\ne s(v_j)$ and $s'(v_j)\ne 0$. Then $s'(x)\ne 0$, since  $x_i=v_j$. Thus  $s'\in X$, whichever way we otherwise define $s'$. By construction $s'(v)\ne s(v)$. Since $u\cap v=\emptyset$, we are free to let $s'(u)=s(u)$, and we are done. 

\medskip

\noindent{\bf Case 1.2:} $x_i\notin v$. Let $s'(u)=s(u)$ and let $s'(v)$ be such that $s'(v)\ne s(v)$.
Since now $x_i\notin u\cup v$,  we can let $s'(x_i)\ne 0$, and this guarantees $s'\in X$. We are done again.

Let us then assume 
${x\subseteq u.}$
%
%
Since $s\in X$, we know that $s(x)$ is not all zero or $s(y)$ is all zero.
\medskip

\noindent{\bf Case 2.1:} $s(x)$ is not all zero.  We let $s'(u)=s(u)$. Since now $s'(x)$ is not all zero, we have $s'\in X$ whatever $s'$ is otherwise. So we choose $s'(v)\ne s(v)$ in an arbitrary way, and we are done.
So let us assume  
\medskip

\noindent{\bf Case 2.2:} $s(y)$ is all zero. If $v\subseteq y$, then ${u\an v}\in\Sigma$ implies
$x\an y\in \Sigma$, by Axiom (A2), contrary to our assumption. So  we may pick $v_i\in v\setminus y$. Let $s'(v_i)\ne s(v_i)$. Now we know $s'(v)\ne s(v)$. Since $u\cap v=\emptyset$, we can let $s'(u)=s(u)$. Since $v_i\notin y$ we can let $s'(y)=s(y)=0$. Now $s'\in X$, whichever way $s'$ is otherwise defined.
\end{proof}

We may combine $x\an y$ with logical operations as is done in dependence logic \cite{MR2351449}. Here is a practical example:

\begin{example} Suppose we collect some data about individuals and we need to publish this data together with the addresses of the individuals. Some individuals may, for privacy reasons,  not want their address to be associated with other data about them. For this purpose we allow any individual to declare their address private by setting ${\tt publicity}={\tt private}$. A team satisfies, according to the semantics of implication given in    \cite{MR2351449}:
$${\tt publicity}={\tt private}\to {\tt data}\an {\tt address},$$ iff the anonymity constraint ${\tt data}\an {\tt address}$ holds (at least) in the part of the team that  satisfies ${\tt publicity}={\tt private}$. 
\end{example}

\begin{example} We continue the previous example. A team satisfies  
$$\exists x (\phi(x)\wedge ({\tt data}\ x)\an {\tt address}),$$ iff (possibly multiple) values can be assigned to $x$ so that the new values of $x$ satisfy $\phi(x)$ and, in addition, the anonymity constraint $({\tt data}\ x)\an {\tt address}$ (still) holds. In other words, data about the individuals can be updated with data about the attributes $x$ without violating the modest anonymity provided by the $\an$-atom.
\end{example}

Our anonymity atom provides the most basic form of anonymity, not sufficient for practical applications. Therefore we strengthen the atom to $k$-anonymity atom. The bigger the $k$, the more anonymity it provides. The idea is that  an attribute  $y$ is $k$-anonymous with respect to some other attributes $x$ if any data about $x$ in the team is associated with at least $k$ values for $y$. Intuitively, my name does not reveal me if there are $k$ different persons with the same name as me, and $k$ is relatively large. In September 9, 2022, there were 4703 people alive with the last name V\"a\"an\"anen in Finland. Thus I can remain relatively anonymous in Finland even if I reveal that my last name is V\"a\"an\"anen. At the same moment of time there were only 7 people  in Finland's population information system with the last name Crossley, providing much less anonymity if the last name was revealed. 
 
\begin{definition}[$k$-anonymity atom]\label{anatom}
A team $X$ satisfies ``publishing $x$ keeps $y$ $k$-anonymous'' in a model $M$, in symbols 
$M\models_X x\ank{k} y$, if $$\forall s\in X(|\{s'(y) : s'\in X \wedge s(x)=s'(x)\}|\ge k).$$  The atom $x\ank{k} y$ is called {\em simple}, if $|y|=1$.
\end{definition}

Note that the criterion is different from
$$\forall s\in X(|\{s\in X : s(x)=s'(x)\wedge s(y)\ne s'(y)|\ge k)$$ but equivalent to
$$\forall s\in X\exists s'_1\ldots\exists s'_k\in X\bigwedge_{1\le i<j\le k}(s(x)=s'_i(x)\wedge s'_i(y)\ne s'_j(y)).$$  

$x\ank{1}y$ is always true and $x\ank{2} y$ is equivalent to $x\an y$.

\begin{definition}\label{a}
The axioms of {\em $k$-anonymity} (for varying $k$) are 
\begin{description}
\item[A1] $xy\ank{k} zu$ implies $yx\ank{k} zu$ and $xy\ank{k} uz$. (Permutation)
\item[A2] $xy\ank{k} z$ implies $x\ank{l} zu$ for $l\le k$. (Monotonicity)
\item[A3] If $xy\ank{k} zy$, then $xy\ank{k} z$. (Cancellation)
\item [A4] If $x\ank{k} $, then $\bot$.
\item [A5] If $x\ank{k_1} y_1$, $xy_1\ank{k_2}y_2, \ldots,  xy_1...y_{n-1}\ank{k_n} y_n$, then $$x\ank{k_1\cdot\ldots\cdot k_n}y_1y_2\ldots y_n.$$ 
\end{description}
\end{definition}

From the axioms (or rules) of Definition~\ref{a} we get a notion $\Sigma\vdash x\ank{k} y$ of {\em derivability} of an atom $x\an y$ from a set $\Sigma$ of such (for various $k$).

We conjecture that the above axioms are complete in the sense that $\Sigma\vdash x\ank{k} y$ holds iff every team satisfying $\Sigma$ also satisfies $x\ank{k} y$. We now prove the conjecture for the special case of simple $k$-anonymity atoms. Then, of course (A5) is  unnecessary.
 
\begin{theorem} The following are equivalent for any set $\Sigma\cup\{x\ank{k} y\}$ of simple $k$-anonymity atoms (for varying $k$):
\begin{enumerate}
\item $\Sigma\vdash x\ank{k} y$. 
\item $M\models_X\Sigma$ implies $M\models_X x\ank{k} y$ for every model $M$ and team $X$.
\end{enumerate}
\end{theorem}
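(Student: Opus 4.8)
The plan is to follow the proof of Theorem~\ref{main} closely, now tracking multiplicities. The implication $(1)\Rightarrow(2)$ is routine: each of (A1)--(A4) of Definition~\ref{a} plainly preserves truth of a simple $k$-anonymity atom in every team --- for (A2), replacing the $u$-fibers of a team by the coarser $x$-fibers only merges fibers and hence only increases the number of values of the right-hand variable in a fiber; for (A3), deleting from the right a variable that already occurs on the left removes a coordinate that is constant along each fiber, leaving the number of realized tuples unchanged --- and (A5) is not needed, all atoms being simple. For the converse, assume $\Sigma\nvdash x\ank{k}y$; I must produce $M$ and $X$ with $M\models_X\Sigma$ but $M\not\models_X x\ank{k}y$. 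We may assume $k\ge2$, the atom $x\ank{k}y$ being trivially valid for $k=1$. As in Theorem~\ref{main}, (A1), (A3) and (A4) normalize the data: if some atom $u\ank{l}v$ of $\Sigma$ with $l\ge2$ had its (single) right-hand variable already in $u$, then permuting by (A1), cancelling by (A3) and applying (A4) would give $\bot$, so $\Sigma$ would prove $x\ank{k}y$ after all; discarding in addition the trivially valid atoms with subscript $1$, we may assume every atom of $\Sigma$ is of the form $u\ank{l}v_1$ with $v_1$ a single variable, $v_1\notin u$ and $l\ge2$. For the same reason we may assume $y\notin x$: were $y$ in $x$, then $x\ank{k}y$ would fail in every non-empty team whereas the (consistent) $\Sigma$ is satisfied by the full team $M^W$ over any infinite $M$, so $(2)$ would already fail.

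Now let $W$ collect all variables occurring in $\Sigma\cup\{x\ank{k}y\}$, let $M$ be a model whose universe contains $0,1,2,\ldots$ and is large enough --- $\mathbb N$ if every subscript occurring is finite, otherwise of cardinality exceeding $k$ and every $l$ with $u\ank{l}v\in\Sigma$ --- and put, with $x=\langle x_1,\dots,x_n\rangle$,
\[ X=\bigl\{\,s\in M^{W}\ :\ \bigvee_{1\le i\le n}\bigl(s(x_i)\ne0\bigr)\ \vee\ s(y)\le k-2\,\bigr\}. \]
This is the single-right-hand-variable analogue of the team used in Theorem~\ref{main}, with the clause ``$s(y)=0$'' relaxed to ``$s(y)\le k-2$''. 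Writing $s_0$ for the all-zero assignment, we have $s_0\in X$ (as $s_0(y)=0\le k-2$), and the $x$-fiber of $s_0$ inside $X$ is exactly $\{s'\in M^W : s'(x)=\langle0,\dots,0\rangle,\ s'(y)\le k-2\}$, which realizes precisely the $k-1$ values $0,1,\dots,k-2$ of $y$; since $k-1<k$, this gives $M\not\models_X x\ank{k}y$.

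It remains to check $M\models_X\Sigma$. Fix an atom $u\ank{l}v_1\in\Sigma$ (so $v_1$ is a single variable with $v_1\notin u$ and $l\ge2$) and an $s\in X$; it suffices to realize inside $X$, with the value of $u$ held equal to $s(u)$, at least $l$ distinct values of $v_1$. As in Theorem~\ref{main}, split on whether $x\subseteq u$. If $x\not\subseteq u$, pick $x_i\in x\setminus u$: putting $s'(u)=s(u)$ and $s'(x_i)\ne0$ forces $s'\in X$ regardless of its other values, and, $v_1\notin u$, we may let $v_1$ take any of $\ge l$ values (if $v_1=x_i$, use the nonzero values of $x_i=v_1$ themselves; otherwise fix $s'(x_i)=1$ and vary $s'(v_1)$). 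If $x\subseteq u$, then $s'(u)=s(u)$ forces $s'(x)=s(x)$, and, when $y\in u$, also $s'(y)=s(y)$; hence in the sub-cases ``$s(x)\ne\langle0,\dots,0\rangle$'' and ``$y\in u$'' every such $s'$ already lies in $X$, and $v_1\notin u$ again lets us realize $\ge l$ values of $v_1$. In the sub-case $s(x)=\langle0,\dots,0\rangle$ (so $s(y)\le k-2$ because $s\in X$), $y\notin u$ and $v_1\ne y$, we fix $s'(y)=0$ and vary $s'(v_1)$ over $\ge l$ values, each resulting $s'$ lying in $X$.

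The one remaining sub-case --- $x\subseteq u$, $s(x)=\langle0,\dots,0\rangle$, $y\notin u$ and $v_1=y$ --- is the crux, and the only point at which $\Sigma\nvdash x\ank{k}y$ is used. Here every $s'\in X$ with $s'(u)=s(u)$ satisfies $s'(y)\le k-2$, so only the $k-1$ values $0,\dots,k-2$ of $v_1=y$ are realizable (and each one is, by taking $s'(u)=s(u)$ and $s'(y)$ equal to it), so we need $l\le k-1$. But since $x\subseteq u$, permuting $u$ so that $x$ is an initial segment and applying Monotonicity (A2) to $u\ank{l}y\in\Sigma$ gives $\Sigma\vdash x\ank{l}y$, and a further application of (A2) would, if $l\ge k$, give $\Sigma\vdash x\ank{k}y$ --- contrary to hypothesis; hence $l\le k-1$ and we are done. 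I expect this last sub-case to be the only real obstacle: everything else is a multiplicity-annotated rerun of Theorem~\ref{main}, the point being that the ``anonymity budget'' $k-2$ hard-wired into $X$ is, by Monotonicity, never smaller than what any atom of $\Sigma$ can demand on the fiber over $\langle0,\dots,0\rangle$. The full, non-simple conjecture is left open here because encoding the product bound of (A5) into the team would require a genuinely different construction.
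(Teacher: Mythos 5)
Your proof is correct and follows essentially the same route as the paper: the same team $X$ (the all-zero $x$-fiber capped at the $k-1$ values $0,\dots,k-2$ of $y$), the same case split on $x\subseteq u$ versus $x\not\subseteq u$, and the same appeal to Monotonicity (A2) in the critical sub-case $x\subseteq u$, $v_1=y$. You have merely filled in the details that the paper compresses into ``as in the proof of Theorem~\ref{main}'', including the normalization of $\Sigma$ and of the goal atom, so there is nothing to correct.
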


\begin{proof} Suppose first $\Sigma\vdash x\ank{k} y$ and $M\models_X\Sigma$. All the rules (A1)-(A4) preserve truth. Hence $M\models_X x\ank{k} y$.
Suppose then $\Sigma\nvdash x\ank{k} y$, where $k\ge 2$. By the validity of (A3) we may assume $y\notin x$.  Let $x=\langle x_1,\ldots,x_n\rangle$. 
Let $W$ be the set of all variables in $\Sigma\cup\{x\ank{k} y\}$. Let $M$ be a model the domain of which contains the natural numbers, and let
$$
X=\{s\in \oN^W : \bigvee_{1\le i\le n}(s(x_i)\ne 0)\vee
(s(y)\in\{0,\ldots, k-2\})\}.
$$
This team, as well as $M$, are infinite, but the set $\Sigma$ may contain $u\ank{k} z$ for all $k\in\oN$, so the infinity of $X$ is unavoidable.
As in the proof of Theorem~\ref{main} we can show that
 $M\not\models_X x\ank{k} y$, namely,  let $s(w)=0$ for all $w\in W$. If $M\models_X x\ank{k} y$, then there are $k$ assignments  $s'\in X$ such that $s'(x_i)=s(x_i)=0$ for $1\le i\le n$ and $s'(y)$ is different for each $s'$. But this contradicts the definition of $X$.
%
Likewise, as in the proof of Theorem~\ref{main}, we can show that $M\models_X \Sigma$, namely, let ${u\ank{l} v}\in \Sigma$. We may assume, w.l.o.g., $v\notin u$. Let $s\in X$. By considering different cases according to whether
${x\subseteq u}$ or ${x\not\subseteq u}$, we can in both cases construct $l$ different $s'\in X$ such that $s'(u)=s(u)$ and $s'(v)$ is different for each $s'$.
\end{proof}

We have discussed only inferences among atomic anonymity statements. What about more complex formulas? Unfortunately consequence in  full first order logic with anonymity atoms is badly non-axiomatizable, a consequence of results in \cite{galthesis}. We can still investigate fragments of first order logic with anonymity atoms and look for fragments of low complexity. There are some early results in this direction in \cite{https://doi.org/10.48550/arxiv.2204.00576}.


\end{document}